\providecommand{\algorithmname}{Algorithm}
\definecolor{darkgreen}{rgb}{0, 0.5, 0} 
\definecolor{lightpurple}{rgb}{0.7, 0.4, 1} 
\definecolor{orcidlogocol}{HTML}{A6CE39}
\tikzset{
orcidlogo/.pic={
\fill[orcidlogocol] svg{M256,128c0,70.7-57.3,128-128,128C57.3,256,0,198.7,0,128C0,57.3,57.3,0,128,0C198.7,0,256,57.3,256,128z};
\fill[white] svg{M86.3,186.2H70.9V79.1h15.4v48.4V186.2z}
svg{M108.9,79.1h41.6c39.6,0,57,28.3,57,53.6c0,27.5-21.5,53.6-56.8,53.6h-41.8V79.1z M124.3,172.4h24.5c34.9,0,42.9-26.5,42.9-39.7c0-21.5-13.7-39.7-43.7-39.7h-23.7V172.4z}
svg{M88.7,56.8c0,5.5-4.5,10.1-10.1,10.1c-5.6,0-10.1-4.6-10.1-10.1c0-5.6,4.5-10.1,10.1-10.1C84.2,46.7,88.7,51.3,88.7,56.8z};
}
}
\newcommand\orcidicon[1]{\href{https://orcid.org/#1}{\mbox{\scalerel*{
\begin{tikzpicture}[yscale=-1,transform shape]
\pic{orcidlogo};
\end{tikzpicture}
}{|}}}}
\newtheorem{proof}{Proof}
\newtheorem{proposition}{{Proposition}}
\newcommand*{\QED}[1][$\blacksquare$]{%
\leavevmode\unskip\penalty9999 \hbox{}\nobreak\hfill
\quad\hbox{#1}%
}
\begin{document}
\author{
$ 
\text{Jalal Jalali}^{\orcidicon{0000-0002-3609-6775}}\ \IEEEmembership{Member, IEEE}, 
\text{Hina Tabassum}^{\orcidicon{0000-0002-7379-6949}}\ \IEEEmembership{Senior Member, IEEE}
,\
\text{Jeroen Famaey}^{\orcidicon{0000-0002-3587-1354}}\ \IEEEmembership{Senior Member, IEEE}
,\
\newline
\text{Walid Saad}^{\orcidicon{0000-0003-2247-2458}}\ \IEEEmembership{Fellow, IEEE},~
\text{and}\
\text{Murat Uysal}^{\orcidicon{0000-0001-5945-0813}}\ \IEEEmembership{Fellow, IEEE}
\vspace{-5mm}
$.
\thanks{Jalal Jalali
and Jeroen Famaey are with IDLab research group, University of Antwerp - imec, 2000 Antwerp, Belgium (e-mail: \{Jalal.Jalali, 
\!Jeroen.Famaey\}@imec.be).
Jalal Jalali is also with Wireless Communication Research Group, JuliaSpace Inc., Chicago, IL, USA (e-mail: josh@juliaspace.com).
Hina Tabassum is with 
the Department of Electrical Engineering and Computer Science (EECS), 
York University, 
Toronto, ON M3J 1P3, 
Canada
(e-mail: Hina.Tabassum@lassonde.yorku.ca).
Walid Saad is with the Bradley Department of Electrical and Computer Engineering, Virginia Tech, USA (e-mail: walids@vt.edu).
Murat Uysal is with the Engineering Division, New York University Abu Dhabi (NYUAD), Abu Dhabi 129188, UAE (e-mail: murat.uysal@nyu.edu).
}
}
\title{
\LARGE 
Placement, Orientation, and Resource Allocation Optimization for Cell-Free OIRS-aided OWC Network}
\maketitle

\begin{abstract}

The emergence of optical intelligent reflecting surface (OIRS) technologies marks a milestone in optical wireless communication (OWC) systems, enabling enhanced control over light propagation in indoor environments. This capability allows for the customization of channel conditions to achieve specific performance goals. 
This paper presents an enhancement in downlink cell-free OWC networks through the integration of OIRS. The key focus is on fine-tuning crucial parameters, including transmit power, receiver orientations, OIRS elements allocation, and strategic placement. 
In particular, a multi-objective optimization problem (MOOP) aimed at simultaneously improving the network's spectral efficiency (SE) and energy efficiency (EE) while adhering to the network's quality of service (QoS) constraints is formulated. 
The problem is solved by employing the $\epsilon$-constraint method to convert the MOOP into a single-objective optimization problem and solving it through successive convex approximation. Simulation results show the significant impact of OIRS on SE and EE, confirming its effectiveness in improving OWC network performance.
\end{abstract}
\begin{IEEEkeywords}
Energy efficiency (EE), optical intelligent reflecting surface (OIRS), optical wireless communication (OWC), multi-objective optimization problem (MOOP), and spectral efficiency (SE).
\end{IEEEkeywords}

\section{Introduction}
\raggedbottom
\indent 
\IEEEPARstart{O}{ptical Wireless communication (OWC)} is gaining recognition as a complement to conventional radio frequency (RF) communications due to its dual functionality, offering both illumination and high-speed data transmission in unregulated license-free spectrum~\cite{9614037}. 
OWC is also renowned for its cost-effectiveness and low energy consumption as it utilizes existing lighting infrastructure~\cite{10462222}. 
In typical downlink OWC networks, the primary roles of transmitters and receivers are fulfilled by light-emitting diodes (LEDs) and photodiodes (PDs), respectively~\cite{9893325}. 
These systems, however, encounter significant line-of-sight (LoS) blockages that can severely impair performance, particularly in indoor environments. 
To address these challenges, \textit{cell-free transmissions} and \textit{optical} intelligent reflecting surface (OIRS) technology offer a potent solution to mitigate blockages by creating alternative wireless propagation paths~\cite{9662064}. 

The OIRSs can be realized through designs based on either mirror arrays or meta-surfaces~\cite{9276478}. It has been shown recently that the mirror array outperforms the metasurface in OWC systems \cite{9500409}. Consequently, a number of recent research works considered optimizing the OWC system in the presence of mirror-based OIRSs. For instance, in \cite{9500409}, the authors maximized secrecy rate while optimizing the orientation of OIRS elements. In \cite{10168927}, the authors derived spectral efficency (SE) and energy efficiency (EE) expressions in the presence of OIRS, but optimization was not considered. Data rate maximization was considered in \cite{9910023, 9543660} by optimizing refractive index and orientations of the OIRS elements, respectively. 
Recently, the potential of OIRS in multiple input multiple output (MIMO) OWC systems has also been demonstrated~\cite{10024150}. 

\textcolor{black}{Although previous works optimized various aspects of OWC systems~\cite{9276478,9500409,9910023,10168927,9543660}, multi-objective \textit{SE and EE maximization} in a \textit{cell-free OIRS-enabled OWC system} presents significant challenges and remains largely unexplored.
The inherent trade-off between SE and EE requires a careful balance: improving SE often demands higher power consumption, which adversely affects EE. Furthermore, the joint optimization of OIRS placement, element assignment, power allocation, and user orientation in a cell-free environment introduces additional computational complexity and non-convexity, necessitating efficient algorithmic solutions.}

\textcolor{black}{In this paper, we develop a framework to jointly maximize the SE and EE of a downlink multi-user OIRS-aided cell-free OWC network. We optimize key system parameters, including OIRS placement, LED-OIRS element assignment, power allocation, and user orientation.
Our proposed approach leverages a multi-objective optimization problem (MOOP) framework. This framework not only improves the total data rate but also EE, offering mathematical rigor and interpretability. 
We propose an iterative, low-complexity algorithm that leverages the $\epsilon$-constraint method to convert the MOOP into a single-objective optimization problem (SOOP) and employs successive convex approximation (SCA) techniques for efficient resolution.
Our findings emphasize the significant impact of OIRS in enhancing SE and EE, illustrating the interplay between the two metrics. Specifically, the proposed method achieves optimal EE at approximately $\rm{3~bits/sec/Hz}$ SE and $\rm{30~dBm}$ optical power, offering a novel and effective approach to designing OIRS-enhanced cell-free OWC networks.}



\textit{Notations:} Scalars, vectors, and matrices are denoted by lowercase italics ($a$), bold lowercase ($\boldsymbol{a}$), and bold uppercase ($\boldsymbol{A}$). Transpose is $\boldsymbol{a}^T$, Hadamard product is $\boldsymbol{A} \circ \boldsymbol{B}$, trace is $\operatorname{tr}(\boldsymbol{A})$, $N$-element ones vector is $\boldsymbol{1}_N$, and $N\times N$ identity matrix is $\boldsymbol{I}_N$. Positive real numbers set is $\mathbb{R}_+$, Euclidean norm is $\|\cdot\|$, and diagonalization is $\rm{diag}(\cdot)$.


\section{OIRS-aided OWC System Description}
\subsection{OWC System Configuration}
We consider the downlink of an OIRS-aided cell-free
OWC system, as depicted in Fig.~\ref{system_model}, where $L$ LEDs serve $K$ 
PD users, with a mirror array-based OIRS with $N$ units enhancing communication. 
We utilize a 3D Cartesian coordinate system, positioning the LEDs at static locations $\boldsymbol{L}_l = [L_{x,l}, L_{y,l}, L_{z,l}]^T \in \mathbb{R}^{3 \times 1}$, the users at $\boldsymbol{u}_k = [u_{x,k}, u_{y,k}, u_{z,k}]^T \in \mathbb{R}^{3 \times 1}$, and the OIRS central location at $\boldsymbol{q} = [q_{x}, q_{y}, q_{z}]^T \in \mathbb{R}^{3 \times 1}$.
We confine the area of interest to four vertical Cartesian planes, $\mathcal{H}_1$ to $\mathcal{H}_4$, where the OIRS could potentially be positioned at: 
\begin{align}
\mathcal{H}_1\!&:\!y_{\min}\!<\!q_y\!<\!y_{\max},
z_{\min}\!<\!q_z\!<\!z_{\max},  q_x\!=\!x_{\min},
\label{H1}\\
\mathcal{H}_2\!&:\!y_{\min}\!<\!q_y\!<\!y_{\max},
z_{\min}\!<\!q_z\!<\!z_{\max},  q_x\!=\!x_{\max},
\label{H2}
\end{align}
\begin{align}
\mathcal{H}_3\!&:\!x_{\min}\!<\!q_x\!<\!x_{\max},
z_{\min}\!<\!q_z\!<\!z_{\max},
q_y\!=\! y_{\min},
\label{H3}\\
\mathcal{H}_4\!&:\!x_{\min}\!<\!q_x\!<\!x_{\max},
z_{\min}\!<\!q_z\!<\!z_{\max},
q_y\!=\! y_{\max}.
\label{H4}
\end{align}
These regions ensure that the OIRS is placed in one of the corner walls of a room-shaped environment.
In this setup, each LED transmits the data of different PDs in a specific time slot, resulting in multi-user interferences (MUI) initiating from different LEDs. 
\textcolor{black}{We use intensity modulation and direct detection (IM/DD), which is a common approach for OWC systems with LEDs. This method is suitable because LEDs are non-coherent light sources, making IM/DD both simple and effective~\cite{9954038}. Our system also employs on-off keying (OOK) for pulse modulation to encode data onto the emitted light. Given the IM/DD assumption, the LED-emitted information symbols are represented by the vector $\boldsymbol{s} = [s_1, \ldots, s_K]^T \in \mathbb{R}^{K \times 1}_+$, with the expectation $E\{|s_k|^2\} = 1, \forall k$.}
These symbols are mapped onto the transmitted signal vector $\boldsymbol{x} = [x_1,\ldots, x_L]^2$. 
The association between $\boldsymbol{s}$ and $\boldsymbol{x}$ is established through 
$\boldsymbol{x}~=~\boldsymbol{A}\boldsymbol{s}$, 
wherein $\boldsymbol{A} = [\boldsymbol{a}_1, \ldots, \boldsymbol{a}_K] \in \mathbb{R}^{L \times K}_+$ 
is a binary matrix with each column vector $\boldsymbol{a}_k = [a_{1,k}, \ldots, a_{L,k}]^T \in \mathbb{R}^{L \times 1}_+$ in 
$\boldsymbol{A}$ adhering to the constraint $\sum_{k=1}^K a_{l,k} = 1, \forall l$.
For our analysis, we assume that the OWC channel state information (CSI) is known at the system IRS controller~\cite{10168927}. 



\vspace{-1mm}
\subsection{LoS/NLoS Channel Model}
\textcolor{black}{In the IM/DD-based OWC system, the LoS channel gain adheres to the Lambertian model~\cite{10190313}.} 
The field of view (FoV) channel gain expression for PD $k$ from LED $l$ is given by:
\begin{equation}
h_{l,k} = 
\frac{C_{\mathrm{PD}}(j+1)}
{2\pi {d}_{l,k}^{2}} 
\cos^{j}(\Psi_{l}) 
{f}_{o} 
\cos(\Omega _{k}) 
{f}_{c},
\forall l,k,
\end{equation}
where $C_{\mathrm{PD}}$ is the PD's physical area, $j$ is the index of Lambertian emission\footnote{The index of Lambertian emission is given by $j=-{\ln 2}/{\ln (\cos \Phi_{1/2})}$, where $\Phi_{1/2}$ is the semi-angle at half power illuminance of the LED~\cite{10024150}.}, and ${d}_{l,k}=\|\boldsymbol{L}_l-\boldsymbol{u}_k\|$ is the distance between LED $l$ and user $k$. 
Angles $\Psi_k$ and $\Omega_k$ are the irradiance and incidence angles for the LoS path from the LED to user $k$. 
Additionally, $f_{o}$ and  $f_{c}$ represent the gains from the optical filter and the optical concentrator. 
For simplicity, the LoS channel gain vector from all LEDs to each PD user is defined as 
$\boldsymbol{h}_k = [h_{1,k},h_{2,k},\ldots,h_{L,k}]^T \in \mathbb{R}^{L \times 1}_+$.

\begin{figure}[ptb]
\centering
\includegraphics[width=.9\linewidth]{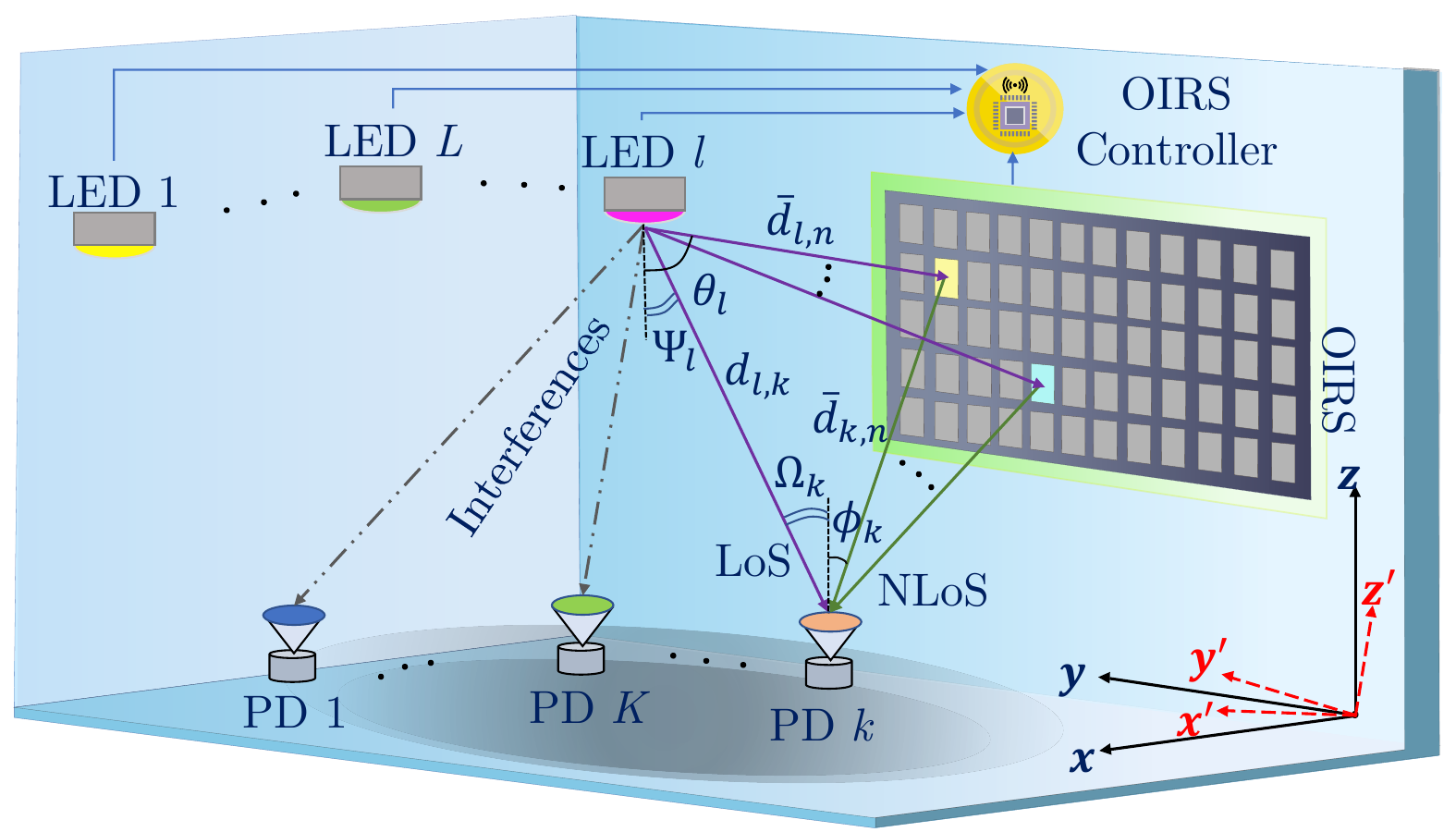}	
\vspace{-4mm}
\caption{Illustration of the OIRS-supported OWC network, where LED $l$ and its reflection are symmetrically positioned across the $x'y'z'$ plane.}
\label{system_model}
\end{figure} 
In an OIRS-assisted OWC system, reflections predominantly involve specular reflections, with minimal diffuse reflections due to relatively low intensity compared to the LoS channel gain~\cite{9681888}.
\textcolor{black}{As an IM/DD system cannot handle phase information, no phase manipulation is involved in OIRS~\cite{10198213}.}
An approximate method for quantifying the OIRS-reflected channel gain is to treat the specular reflection as emanating from a virtual image source~\cite{9662064}. 
In this approach, the OIRS coefficient factor \!$\xi$, additive in nature, is used to account for energy loss during reflection~\cite{9276478}.
Therefore, the gain from the OIRS-reflected channel between the LED and user $k$, facilitated by OIRS unit $n$, can be given as: 
\begin{equation} 
\hspace{-2mm}
{g}_{l,n,k} \!= \!
\xi 
\frac{C_{\mathrm{PD}}(j+1)}
{2\pi({\bar{d}}_{l,n}+{\bar{d}}_{k,n})^{2}} 
\cos^{j}(\theta_{k})
{f}_{o}
\cos(\phi_{k}) 
{f}_{c},
\forall l,n,k,
\end{equation}
where ${\bar{d}}_{l,n}=\|\boldsymbol{L}_l-\boldsymbol{q}\|$ and ${\bar{d}}_{n,k}=\|\boldsymbol{q}-\boldsymbol{u}_k\|$ are the distances from LED $l$ to the OIRS and from the OIRS to PD $k$.
Subsequently, the angles of irradiance and incidence pertaining to the OIRS-reflected path are denoted by $\theta_k$ and $\phi_k$, respectively. 
In addition, for ease of representation, we define 
$\boldsymbol{G}_k = [\boldsymbol{g}_{1,k}, \ldots, \boldsymbol{g}_{L,k}] \in \mathbb{R}^{N \times L}_+$, where 
$\boldsymbol{g}_{l,k} = [g_{l,1,k}, g_{l,2,k}, \ldots, g_{l,N,k}]^T \in \mathbb{R}^{N \times 1}_+$, as the NLoS OIRS-reflected channel matrix of LED $l$ and user $k$.

\vspace{-0mm}
\subsection{Received Signal and SINR}
In the multi-user cell-free OWC framework, each PD is capable of receiving signals from all LEDs.
The LoS signal component arrived at user $k$ can be formulated as:
\begin{equation}
y_k^{\mathrm{LoS}} = 
\delta \sum_{l=1}^{L} 
h_{l,k} P_l x_l = 
\delta \boldsymbol{h}^T_k 
\boldsymbol{P} \boldsymbol{A} \boldsymbol{s}, 
\forall k,
\label{main_y_los}
\end{equation}
where term $h_{l,k} P_l x_l$ represents the useful signal from LED $l$ and corresponds to the received signal for user $k$, whereas the other ($L-1$) terms in the summation constitute the MUI.
In~\eqref{main_y_los}, $P_l$ is the power output of LEDs with 
$\boldsymbol{P}=\mathrm{diag}(P_1,\ldots,P_L)$, and $\delta$ the PD's sensitivity. 
For the NLoS signal, 
we define a binary allocation matrix
$\boldsymbol{B} = [\boldsymbol{b}_1, \ldots, \boldsymbol{b}_L] \in \mathbb{R}^{N \times L}_+$, 
where $\boldsymbol{b}_l = [b_{1,l},\ldots, b_{N,l}]^T \in \mathbb{R}^{N \times 1}_+$ indicates the association of OIRS elements with LEDs. 
If $b_{n,l} = 1$, it implies the element $n$ is dedicated to LED $l$. 
Additionally, the constraint $\sum_{l=1}^{L} b_{n,l} = 1$ is applied to ensure each OIRS element is assigned to only one LED. 
\textcolor{black}{Upon identifying the transmitter, the OIRS unit adjusts its surface based on a reverse look-up table, modifying the orientation and state of its elements. Each OIRS element can be active $1$ or inactive $0$. Active elements reflect incident light from an LED to the intended PD using predetermined angles and positions from the look-up table. This table is based on system geometry and desired reflection paths, ensuring efficient light direction without phase manipulation. The binary vector $\boldsymbol{b}$ controls which elements are active, optimizing communication performance. Due to precise reflection directions and sparse user distribution, unintended reflections are minimal.}
Hence, the interference from OIRS paths is considered insignificant and can be disregarded~\cite{9662064}.
Consequently, the NLoS signal component at user $k$ becomes:
\begin{equation} 
y_{k}^{\mathrm{NLoS}}=
\delta
\boldsymbol{a}_{k}^{T}
\mathrm{diag}
\{\boldsymbol{G}_{k}^{T} \boldsymbol{PB}\}
s_{k},
\forall k,
\end{equation}
where 
$\boldsymbol{g}_{l,k}^{T} \boldsymbol{P}\boldsymbol{b}_l$ is the channel gain from LED $l$ to user $k$, with the sum of all channel gains from LEDs that serve user $k$ being collated.
Given the LoS and NLoS components, user $k$'s received signal becomes:
$
y_k \!= y_k^{\mathrm{LoS}} \!+ y_k^{\mathrm{NLoS}} \!+ n_k, \!
\forall k,
$
where $n_k \!\!\sim \!\!\mathcal{N}(0, \!\sigma^2)$ is the additive white Gaussian noise (AWGN).
Thus, user $k$ signal-to-interference-plus-noise ratio (SINR) can be written as:
\begin{equation} 
\gamma_{k}=
\frac{
\delta^{2}
(
\boldsymbol{h}_k^T \boldsymbol{P}\boldsymbol{a}_{k}+\boldsymbol{a}_{k}^{T}
{\mathrm{diag}}\{\boldsymbol{G}_{k}^{T}\boldsymbol{B}\boldsymbol{P}\}
)^{2}}
{\sigma^{2}+
\delta^{2} \sum_{i=1, i \neq k}^{K}
(\boldsymbol{h}_k^T \boldsymbol{P} \boldsymbol{a}_{i})^{2}}
,\forall k.
\end{equation}
Finally, considering OWC characteristics,  a tight lower bound for user $k$'s data rate can be given as~\cite{9500409}: 
\begin{equation} 
R_{k}
=\frac{1}{2} C
\ln
\left(
{1+\frac{e}{2\pi}\gamma_{k}}\right)
,\forall k,
\end{equation}
where $C \in \mathbb{R}_+$ is the transmitted signal bandwidth.
\textcolor{black}{Efficient optimization of OIRS placement and resource allocation can substantially enhance both spectral and energy efficiency in cell-free OWC networks, mitigating the impact of MUI and LoS blockages and leading to improved reliability and data rates in indoor environments.}

\section{Proposed Algorithm: An SE and EE Tradeoff}
\vspace{2mm}
In this section, we develop an MOOP for the OIRS-assisted OWC system. 
The MOOP 
offers the potential to discover novel approaches for harmonizing and enhancing both SE and EE when adhering to data rate and transmit power constraints.
The EE is expressed as $\eta_{EE} = {R_{\mathrm{tot}}}/{P_{\mathrm{tot}}}$, where $R_{\mathrm{tot}} = \sum_{k=1}^{K} R_k$ is the total data rate, and $P_{\mathrm{tot}} = \mathrm{tr}(\boldsymbol{P}) + p_{\mathrm{cir}}$ is total consumed power with $p_{\mathrm{cir}}$ as a fixed circuit power for the OWC system's operations.
Additionally, the SE is defined as $\eta_{SE}=R_{\mathrm{tot}}/C$. 
The MOOP is thus formulated as follows:
\begin{subequations}
\label{P1_main}
\begin{align}
\text{P}_1: &  \max_{\boldsymbol{P},\boldsymbol{B},\boldsymbol{\Lambda},\boldsymbol{q}}
\text{ \ \ }
\left[\eta_{EE}
(\boldsymbol{P},\boldsymbol{B},\boldsymbol{\Lambda},\boldsymbol{q}),
\eta_{SE}(\boldsymbol{P},\boldsymbol{B},\boldsymbol{\Lambda},\boldsymbol{q})\right]
\label{P1_1}\\
{\rm{s.t.}}: &~
R_k(\boldsymbol{P},\boldsymbol{B},\boldsymbol{\Lambda},\boldsymbol{q})
\geq R_{\min,k}
, ~~ \: \forall k,
\label{p1_c1}\\
& ~\mathrm{tr}(\boldsymbol{P}) \leq P_{\max},
\label{p1_c2}\\
& ~b_{n,l}=\{0,1\},~~~\:\forall n ,l,
\label{p1_c3}\\
& ~\sum_{l=1}^{L} b_{n,l} = 1,
\label{p1_c4}
\\
& ~0 \leq \Omega_{k} \leq 
\Omega_{\mathrm{FoV}},\forall k,
\label{p1_c5}
\\
& ~0 \leq \phi_{k} \leq
\phi_{\mathrm{FoV}},~\forall k, ~\rm{and~}\eqref{H1}-\eqref{H4},
\label{p1_c6}
\end{align}
where $\boldsymbol{\Lambda}=[[\Omega_1,\ldots,\Omega_K]^T,[\phi_1,\ldots,\phi_K]^T]$
is the receiver orientation angles feasibility set.
In \eqref{P1_main}, constraint \eqref{p1_c1} establishes a minimum quality of service (QoS) $R_{\min}$ for each PD, and constraint \eqref{p1_c2} caps the total transmit power of the LED at $P_{\max}=L\Bar{P}_{\mathrm{th}}$, where $\Bar{P}_{\mathrm{th}}$ is the average emitted optical power per LED luminary.
Constraints \eqref{p1_c3} and \eqref{p1_c4} arise from the definition of $\boldsymbol{B}$ while
\eqref{p1_c5} and \eqref{p1_c6} restrict the orientation angles to PDs' FoV in specific ranges. 
Observe that the EE is the quotient of the data rate to power consumption: $R_{\mathrm{tot}} = \eta_{SE}C$. 
From this, it follows 
$\eta_{EE} = {\eta_{SE}C}/{P_{\mathrm{tot}}}$. 
Hence, EE maximization is tantamount to simultaneously maximizing the network's data rate and minimizing its transmit powers. 
Therefore, the MOOP in $\text{P}_1$ can be rewritten as follows:
\end{subequations}
\begin{subequations}
\label{p2_modified}
\begin{align}
\text{P}_2: &  \max_{\boldsymbol{P},\boldsymbol{B},\boldsymbol{\Lambda},\boldsymbol{q}}
\text{ \ \ }
\left[R_{\mathrm{tot}}
(\boldsymbol{P},\boldsymbol{B},\boldsymbol{\Lambda},\boldsymbol{q}),
-P_{\mathrm{tot}}(\boldsymbol{P})
\right]
\\
\rm{s.t.}:&   \ \ \eqref{H1}-\eqref{H4} ~\rm{and~} \eqref{p1_c1}-\eqref{p1_c6}.
\nonumber
\end{align}
\end{subequations}
For resolving the MOOP in \eqref{p2_modified}, we implement the $\epsilon$-method~\cite{Arora}. This method designates one of the objectives as the primary objective and relegates the other objectives to the constraint set. This approach reformulates the transformed MOOP in $\text{P}_2$ into a \!SOOP, which is structured as follows:
\begin{subequations}
\label{p3_main}
\begin{align}
\text{P}_3: & 
\min_{\boldsymbol{P},\boldsymbol{B},\boldsymbol{\Lambda},\boldsymbol{q}}
\text{ \ \ }
\mathrm{tr}(\boldsymbol{P})\\
\rm{s.t.}:&  ~
R_{\mathrm{tot}}
(\boldsymbol{P},\boldsymbol{B},\boldsymbol{\Lambda},\boldsymbol{q})
\leq\epsilon,
\label{p3_c1}
\eqref{H1}\!-\!\eqref{H4} 
~\rm{and~}\! 
\eqref{p1_c1}\!-\!\eqref{p1_c6},
\hspace{-2mm}
\end{align}
where $\epsilon$ denotes the upper bound of the data rate.
To determine the optimal fronts, a mathematical rule for selecting $\epsilon$ within the bounds $R_{\min}\leq\epsilon\leq R_{\max}$ is proposed, where $R_{\max}$ and $R_{\min}$ are the maximal and minimal objective points of total data rate, respectively. 
To fully explore the Pareto fronts, we set $\epsilon\!=\!\alpha\ R_{\max}$, where $\alpha \!\in \!(0,1]$ is a positive scalar. 
Addressing the computational complexity of \eqref{p3_main}, we introduce a four-step iterative method. 
Initially, for a given allocation matrix $\boldsymbol{B}$, a given angle matrix $\boldsymbol{\Lambda}$, and given OIRS placement $\boldsymbol{q}$, the transmit power matrix $\boldsymbol{P}$ is determined. 
After calculating the transmit power, it is used to find the allocation matrix. This matrix then aids in determining the angles, and ultimately, the angle matrix assists in initiating the process of finding OIRS placement.
These steps are repeated iteratively until no additional improvements are achieved. 
Therefore, the iterative procedure for solving $\text{P}_3$ is summarized as:
\end{subequations}
\begin{multline}
\!\!\!\!\!
\underbrace{
\boldsymbol{P}^{(0)}\!\!\rightarrow\!
\boldsymbol{B}^{(0)}\!\!\rightarrow\!
\boldsymbol{\Lambda}^{(0)}\!\!\rightarrow\!
\boldsymbol{q}^{(0)}
}
_{\mathrm{Initialization}}\!\!\rightarrow\!
\ldots
\!\rightarrow\!\!
\underbrace{
\boldsymbol{P}^{(t)}\!\!\rightarrow\!\!
\boldsymbol{B}^{(t)}\!\!\rightarrow\!\!
\boldsymbol{\Lambda}^{(t)}\!\rightarrow\!
\boldsymbol{q}^{(t)}
}
_{\mathrm{Iteration\ t}}
\!\!\rightarrow
\\\!\!\!\!\!\!\!\!
\ldots
\rightarrow
\underbrace{
\boldsymbol{P}^{(\mathrm{opt})}\!\!\rightarrow\!\!
\boldsymbol{B}^{(\mathrm{opt})}\!\!\rightarrow\!\!
\boldsymbol{\Lambda}^{(\mathrm{opt})}\!\rightarrow\!
\boldsymbol{q}^{(\mathrm{opt})}
}_{\mathrm{optimal\ solution}} 
\label{Iter}
\end{multline}
where $t>0$ is the iteration number. 
In \eqref{Iter}, the process begins with an initial feasible solution for the variables ($\boldsymbol{P}^{(0)}, \boldsymbol{B}^{(0)}, \boldsymbol{\Lambda}^{(0)}, \boldsymbol{q}^{(0)}$). 
At each iteration, the approach involves optimizing one variable at a time, keeping the others constant. 
This sequential optimization continues, each time using the previously updated variable while maintaining the others unchanged. The iteration halts when the conditions 
$||\boldsymbol{P}^{(t)}-\boldsymbol{P}^{(t-1)}||\leq\upmu_{1}$,
$||\boldsymbol{B}^{(t)}-\boldsymbol{B}^{(t-1)}||\leq\upmu_{2}$,
$||\boldsymbol{\Lambda}^{(t)}-\boldsymbol{\Lambda}^{(t-1)}||\leq\upmu_{3}$ 
$||\boldsymbol{q}^{(t)}-\boldsymbol{q}^{(t-1)}||\leq\upmu_{4}$,
are met, i.e., convergence, 
where $0<\upmu_{\tau}\ll 1,  \forall \tau=\{1,2,3,4\}$.
\vspace{-1mm}
\subsection{Step 1: LED Emission Power}\label{subA}
We begin with the assumption that the optimal $\boldsymbol{B}$, $\boldsymbol{\Lambda}$, and $\boldsymbol{q}$ are fixed. 
Consequently, the problem in \eqref{p3_main} narrows down to identifying $\boldsymbol{P}$ alone. 
To address the resulting nonconvex problem, we introduce a logarithmic approximation strategy, a form of SCA~\cite{5165179}. 
This technique circumvents the complexities of the nonconcave rate function by establishing a concave lower bound for the rates. 
For this purpose, we define: 
\begin{equation}
v_k\ln(z_{k})+u_k
\leq
\ln(1+z_{k}), \forall k,
\label{app}
\end{equation}
that is a tight (exact) approximation at $z_{k}=\hat{z}_{k}\geq 0$ when the approximation constants are chosen as:
\begin{align}
v_k
=&
\frac{\hat{z}_{k}}
{1+\hat{z}_{k}}, \forall k,
\label{app5}
\\
u_k
=&
\ln(1+\hat{z}_{k})
-\frac{\hat{z}_{k}}{1+\hat{z}_{k}}
\ln(\hat{z}_{k}), \forall k.
\label{app6}
\end{align}
Equations \eqref{app}$-$\eqref{app6} arise from setting equal the slope and function values at each $\hat{z}_{k}$, establishing a distinct correlation between every $\hat{z}_{k}$ and its corresponding pair ${v_{k}, u_{k}}$. 
Utilizing the approximation in \eqref{app} and applying the variable transformation $\tilde{\boldsymbol{P}} = \mathrm{diag}(\ln(P_1),\ldots,\ln(P_L))$, $\text{P}_3$ is rephrased as:
\begin{subequations}
\label{p4_main}
\begin{align}
\text{P}_4: & 
\min_{\tilde{\boldsymbol{P}}}~
\mathrm{tr}
(\exp(\tilde{\boldsymbol{P}})\circ \boldsymbol{I}_{L})
\label{p4_obj}\\
\rm{s.t.}:&
\sum\limits_{k=1}^{K}
v_k\ln(\hat{z}_{k})+u_k
\leq
\epsilon,
\label{p4_c1}\\
& 
v_k\ln(\hat{z}_{k})+u_k
\geq R_{\min},
~\forall k,
\label{p4_c2}\\
& \mathrm{tr}(\exp(\tilde{\boldsymbol{P}})\circ \boldsymbol{I}_{L}) \leq P_{\max},
\label{p4_c3}
\end{align}
\end{subequations}
where 
\begin{align}
\hat{z}_{k}
\!=\!
\frac{e
\delta^{2}
(
\boldsymbol{h}_k^T 
\exp(\tilde{\boldsymbol{P}}) 
\!\circ\! 
\boldsymbol{I}_{L}
\boldsymbol{a}_{k}
\!+\!
\boldsymbol{a}_{k}^{T}
{\mathrm{diag}}\{\boldsymbol{G}_{k}^{T}
\boldsymbol{B}
\exp(\tilde{\boldsymbol{P}})
\!\circ \!\boldsymbol{I}_{L}\}
)^{2}}
{2\pi\sigma^{2}+
2\pi\delta^{2} \sum_{i=1, i \neq k}^{K}
(\boldsymbol{h}_k^T
\exp(\tilde{\boldsymbol{P}})\circ \boldsymbol{I}_{L}
\boldsymbol{a}_{i})^{2}}.
\end{align}
Recognizing the log-sum-exp function is convex, problem \eqref{p4_main} falls into the category of convex problems~\cite{5165179,10459057}. 

\subsection{Step 2: OIRS' Elements Allocation}\label{subB}
Given the transmit power matrix $\boldsymbol{P}$ obtained from the preceding subproblem, along with known $\boldsymbol{\Lambda}$ and $\boldsymbol{q}$ from the previous iteration $t-1$, the next step involves resolving the following feasibility problem to find $\boldsymbol{B}^{}$:
\begin{align}
\label{p5_main}
\text{P}_5: & 
~~\min_{\boldsymbol{B}}
\text{ \ \ }
\boldsymbol{1}^T_N \boldsymbol{1}_L
~~~~~
\rm{s.t.}:
\eqref{p1_c1}, \eqref{p1_c3},\eqref{p1_c4}, \eqref{p3_c1},
\end{align}
where the matrix $\boldsymbol{1}^T_N \boldsymbol{1}_L$ consists entirely of ones, that is $[1]_{n,l}$.   
The integer programming problem \eqref{p5_main}, isomorphic to resource allocation problems, is non-convex and non-deterministic polynomial-time (NP) hard due to the discrete nature of the matrix $\boldsymbol{B}$, resulting in substantial computational complexity. 
To make optimization techniques applicable, a continuous version of the matrix $\boldsymbol{B}$ is employed instead of the binary format. 
Therefore, the constraint in \eqref{p1_c3} is relaxed, enabling a redefinition of \eqref{p5_main} as:
\begin{subequations}
\label{p6_main}
\begin{align}
\text{P}_6: & 
\min_{\boldsymbol{B}}
\text{ \ \ }
\boldsymbol{1}^T_N \boldsymbol{1}_L
\\
\rm{s.t.}:&  ~0 \leq
b_{n,l}
\leq 1
,~~~~~\:~\forall n ,l, 
\label{p6_c1}
\\
&~  0 \leq
b_{n,l} \!-\! b_{n,l}^2 
\leq 1
,\forall n ,l, 
\label{p6_c21}
\rm{and~}
\eqref{p1_c1},\eqref{p1_c4}, \eqref{p3_c1}.
\hspace{-2mm}
\end{align}
\end{subequations}
The relaxation applied to the matrix $\boldsymbol{B}$ transforms the data rate functions in \eqref{p1_c1} and \eqref{p3_c1} into asymptotically concave functions. This change provides a compelling reason to address \eqref{p6_main} using convex optimization techniques, achieving a global optimum~\cite{10456885}. 

\vspace{-3mm}
\subsection{Step 3: PD Orientation Angles}\label{subC}
This section focuses on determining the optimal orientation angles $\boldsymbol{\Lambda}$. 
With $\boldsymbol{P}$, $\boldsymbol{B}$, and $\boldsymbol{q}$ held fixed, we address the nonconvex problem presented in \eqref{p3_main} utilizing the Augmented Lagrangian method. 
This approach merges the Lagrangian function with a quadratic penalty function, as shown in  \eqref{lagran},
where $\kappa$,~$\boldsymbol{\omega }$,~$\boldsymbol{\varpi}$,~and~$\mu$ serve as the Lagrangian multipliers, while $\zeta$ acts as a tunable penalty parameter.
\begin{figure*}
\begin{align}
&  
\mathcal{L}_{\zeta}
\left( 
\boldsymbol{\Lambda}, \boldsymbol{P},
\boldsymbol{q},
\kappa,
\boldsymbol{\varrho},
\boldsymbol{\omega},
\boldsymbol{\varpi}
\right) =
\mathrm{tr}(\boldsymbol{P})
+
\frac{1}{2\zeta}
\Big(
\max
\{0,\kappa+\zeta(\epsilon-R_{\mathrm{tot}})\}^{2}-\kappa^{2}
\Big)
 +
\sum\nolimits_{k=1}^{K}
\Big(
\max
\{0,\varrho_{k}+\zeta(R_{\min}-R_{k})\}^{2}-\varrho_{k}^{2}
\Big)
\nonumber\\
&+
\sum\nolimits_{k=1}^{K}
\Big(
\max
\{0,~\omega_{k}+\zeta(\Omega_{k}-\Omega_{\mathrm{FoV}})\}^{2}-\omega_{k}^{2}
\Big) 
+
\sum\nolimits_{k=1}^{K}
\Big(
\max
\{0,\varpi_{k}+
\zeta
(\phi_{k} - \phi_{\mathrm{FoV}})\}^{2}-\varpi_{k}^{2}
\Big),
\label{lagran}
\end{align}
\hrule
\end{figure*}
These multipliers are given by: 
\begin{align}
\kappa^{(t_{3}+1)}  
& =\max\left\{  0,\kappa^{(t_{3})}+\zeta(
\epsilon-R_{\mathrm{tot}})\right\},
\label{lag_mul_1}\\
\varrho_{k}^{(t_{3}+1)}  
& =\max \left\{0,\varrho_{k}^{(t_{3})}+\zeta \left(R_{\min}-R_{k}\right)\right\},
\label{lag_mul_2}\\
\omega_{k}^{(t_{3}+1)}  
& =\max\left\{0,\omega_{k}^{(t_{3})}+
\zeta(\Omega_{k}-\Omega_{\mathrm{FoV}})\right\},
\label{lag_mul_3}
\end{align}
\begin{align}
\varpi_{k}^{(t_{3}+1)}  
& =\max\left\{  0,\varpi_{k}^{(t_{3})}+
\zeta(\phi_{k} - \phi_{\mathrm{FoV}})\right\}.
\label{lag_mul_4}
\end{align}
The convergence rate of the augmented Lagrangian method, characterized by a constant ratio proportional to $1/\zeta$, exhibits linear behavior when the penalty parameter $\zeta$ is equal to or exceeds a certain positive threshold.

\vspace{-3mm}
\subsection{Step 4: OIRS Placement}\label{subD}
In the last phase, the optimization of the OIRS placement is performed, drawing on the values of $\boldsymbol{P}$, $\boldsymbol{B}$, and $\boldsymbol{\Lambda}$ obtained in the preceding iteration. Given that, the optimization problem for the OIRS placement, $\boldsymbol{q}^{}$, becomes:
\begin{subequations}
\label{p7_main}
\begin{align}
\text{P}_8: & 
\min_{\boldsymbol{q}}
\text{ \ \ }
\boldsymbol{1}_N
\\
\rm{s.t.}:&~
\sum_{k=1}^{K} R_k(\boldsymbol{q}) 
\leq \epsilon,
\label{p7_c1}
\\
&~
R_k(\boldsymbol{q}) 
\geq R_{\min,k},  \: \forall k,~\rm{and~}\eqref{H1}-\eqref{H4}.
\label{p7_c2}
\end{align}
\end{subequations}
Since the objective function in \eqref{p7_main} is convex and the constraints \eqref{p7_c1} and \eqref{p7_c2} demonstrate monotonic concavity in relation to $\boldsymbol{q}$, well-known optimization tools could be used to find an optimal solution for OIRS central position in the multi-user cell-free OWC network. 
Finally, \textbf{Algorithm~\ref{alg_final}} is proposed to determine the LED transmit power matrix, the binary allocation matrix, the PD orientation angles, and the OIRS placement. 
To illustrate the convergence of 
\textcolor{black}{
\begin{proposition}\label{prp_1}
The objective function value of $\text{P}_1$ would be improved via the iterative algorithm.
\end{proposition}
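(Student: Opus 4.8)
\emph{Proof plan.} The plan is to read the four-step recursion in \eqref{Iter} as a block coordinate descent on the single-objective reformulation $\text{P}_3$, show that the scalar value $\mathrm{tr}(\boldsymbol{P}^{(t)})$ is non-increasing and bounded below, hence convergent, and then transfer this to the bi-objective $\text{P}_1$ through the identity $\eta_{EE}=\eta_{SE}C/P_{\mathrm{tot}}$. First I would set up the feasibility bookkeeping: assuming the initialization $(\boldsymbol{P}^{(0)},\boldsymbol{B}^{(0)},\boldsymbol{\Lambda}^{(0)},\boldsymbol{q}^{(0)})$ satisfies all of \eqref{H1}--\eqref{H4} and \eqref{p1_c1}--\eqref{p1_c6}, one checks inductively that after each of Steps~1--4 the current full tuple remains feasible for $\text{P}_3$; the point is that each subproblem fixes three blocks at values for which the incumbent value of the fourth block is feasible, so every subproblem has a nonempty feasible set and its optimizer is again feasible for $\text{P}_3$.

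The heart of the proof is a per-step monotonicity statement. In Step~1, the change of variables $\tilde{\boldsymbol{P}}=\mathrm{diag}(\ln P_1,\dots,\ln P_L)$ and the logarithmic bound \eqref{app} turn the subproblem into the convex program $\text{P}_4$ (log-sum-exp objective and constraints); since \eqref{app} is a \emph{global} lower bound on $\ln(1+z_k)$ that is \emph{exact} at the expansion point $\hat z_k$, (i) with $\hat z_k$ taken at $\boldsymbol{P}^{(t-1)}$ the incumbent is feasible for $\text{P}_4$, so its minimizer satisfies $\mathrm{tr}(\boldsymbol{P}^{(t)})\le\mathrm{tr}(\boldsymbol{P}^{(t-1)})$, and (ii) any $\text{P}_4$-feasible point satisfies the true rate constraints \eqref{p1_c1} and \eqref{p3_c1}, so $\boldsymbol{P}^{(t)}$ is $\text{P}_3$-feasible; if inner SCA re-expansions are performed, the same tightness-at-the-incumbent argument keeps the inner sequence non-increasing. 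Steps~2, 3 and 4 carry \emph{constant} objectives and therefore leave $\mathrm{tr}(\boldsymbol{P})$ untouched: they only re-optimize $\boldsymbol{B}$, $\boldsymbol{\Lambda}$, $\boldsymbol{q}$ so as to restore feasibility of the remaining blocks, using the convex relaxation \eqref{p6_main} for $\boldsymbol{B}$, the augmented-Lagrangian recursion \eqref{lagran}--\eqref{lag_mul_4} for $\boldsymbol{\Lambda}$ (linearly convergent once $\zeta$ exceeds the stated threshold), and the convex, monotone-concave problem \eqref{p7_main} for $\boldsymbol{q}$; in particular the power matrix they hand to Step~1 of iteration $t+1$ is still feasible.

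Chaining these facts gives $\mathrm{tr}(\boldsymbol{P}^{(t+1)})\le\mathrm{tr}(\boldsymbol{P}^{(t)})$ for all $t$, and since $\mathrm{tr}(\boldsymbol{P}^{(t)})\ge 0$ (it is in fact bounded below by the power required to meet the QoS constraints \eqref{p1_c1}), the monotone bounded sequence $\{\mathrm{tr}(\boldsymbol{P}^{(t)})\}$ converges, which is the convergence asserted for the algorithm. To conclude for $\text{P}_1$, recall $P_{\mathrm{tot}}^{(t)}=\mathrm{tr}(\boldsymbol{P}^{(t)})+p_{\mathrm{cir}}$ is non-increasing while the $\epsilon$-constraint \eqref{p3_c1} holds $R_{\mathrm{tot}}=\eta_{SE}C$ at its prescribed level along the run; hence $\eta_{SE}$ does not decrease and $\eta_{EE}=\eta_{SE}C/P_{\mathrm{tot}}$ does not decrease, so each iterate weakly Pareto-dominates its predecessor with respect to the vector objective $[\eta_{EE},\eta_{SE}]$ of $\text{P}_1$, i.e., the objective value of $\text{P}_1$ is improved monotonically and, being bounded, converges.

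The step I expect to be the main obstacle is making the inner-approximation accounting airtight: one must verify simultaneously that the SCA bound of Step~1 and the relaxation of Step~2 are \emph{conservative} (a surrogate-feasible point is genuinely $\text{P}_3$-feasible) and \emph{tight at the current iterate} (so the incumbent stays feasible, which is precisely what yields a non-increase rather than mere boundedness), and then propagate feasibility correctly through the fixed order $\boldsymbol{P}\!\to\!\boldsymbol{B}\!\to\!\boldsymbol{\Lambda}\!\to\!\boldsymbol{q}$, absorbing the inexactness of the Step~3 augmented-Lagrangian solve into the same ledger. Everything after that — the monotone-bounded-convergence argument and its transfer to $\text{P}_1$ via $\eta_{EE}=\eta_{SE}C/P_{\mathrm{tot}}$ — is routine.
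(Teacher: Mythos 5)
Your proposal is structurally different from the paper's proof: the paper argues monotonicity directly on the objective of $\text{P}_1$, defining $f_{\text{P}_1}$ and the per-block objectives $f_{\text{P}_4},f_{\text{P}_6},f_{\text{P}_7},f_{\text{P}_8}$ and chaining the inequalities $f_{\text{P}_1}(\cdot^{(j+1)})=f_{\text{P}_\star}(\cdot^{(j+1)})\geq f_{\text{P}_\star}(\cdot^{(j)})=f_{\text{P}_1}(\cdot^{(j)})$ across the four blocks, whereas you track the scalar $\mathrm{tr}(\boldsymbol{P}^{(t)})$ of $\text{P}_3$ as a monotone bounded sequence and then try to push the conclusion back to $\text{P}_1$ through $\eta_{EE}=\eta_{SE}C/P_{\mathrm{tot}}$. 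Your route has the advantage of working with a genuine scalar merit function (the paper's chain treats the vector objective of $\text{P}_1$ and the constant objectives of the feasibility subproblems $\text{P}_6$ and $\text{P}_8$ as if they were a common scalar, which is loose), and your observation that Steps 2--4 carry constant objectives and therefore cannot change $\mathrm{tr}(\boldsymbol{P})$ is cleaner than the paper's corresponding inequalities.

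However, there are two concrete gaps. First, your claim (ii) in Step 1 --- that any $\text{P}_4$-feasible point satisfies the true constraint \eqref{p3_c1} --- has the inequality pointing the wrong way: \eqref{app} is a \emph{lower} bound on $\ln(1+z_k)$, so surrogate-feasibility of the QoS constraints \eqref{p4_c2} does imply the true \eqref{p1_c1}, but surrogate-feasibility of the upper-bound constraint \eqref{p4_c1} (surrogate rate $\leq\epsilon$) does \emph{not} imply $R_{\mathrm{tot}}\leq\epsilon$; the conservative/tight accounting you flag as the main obstacle genuinely fails for that constraint, and some additional argument (e.g., exactness of \eqref{app} at the incumbent plus a trust-region or re-expansion step) is needed. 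Second, the transfer to $\text{P}_1$ assumes $R_{\mathrm{tot}}$ is held ``at its prescribed level,'' but \eqref{p3_c1} is only an upper bound and \eqref{p1_c1} only a lower bound, so along the iterations $R_{\mathrm{tot}}$ (hence $\eta_{SE}$) may decrease while $\mathrm{tr}(\boldsymbol{P})$ decreases; since $\eta_{EE}=R_{\mathrm{tot}}/P_{\mathrm{tot}}$ has both numerator and denominator moving, the claimed weak Pareto-domination of each iterate over its predecessor does not follow without showing the $\epsilon$-constraint is active (or otherwise lower-bounding the rate trajectory). The paper sidesteps this by asserting per-block equivalence of each subproblem with $\text{P}_1$ and improvement of $f_{\text{P}_1}$ directly; your version makes the missing step visible rather than resolving it.
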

\begin{proof}
See Appendix~\ref{Appen_A}.
\QED
\end{proof}
}




\begin{algorithm}[t]
\caption{\strut Proposed Iterative Algorithm}
\begin{algorithmic}[1]\label{alg_final}
\renewcommand{\algorithmicrequire}{\textbf{Input:}}
\renewcommand{\algorithmicensure}{\textbf{Output:}}
\REQUIRE Set $s=0$, set maximum number of iteration $\mathcal{S}_{\max}$,
initialize
$\boldsymbol{P}=\boldsymbol{P}^{(0)}$,
$\boldsymbol{B}=\boldsymbol{B}^{(0)}$,
$\boldsymbol{\Lambda}=\boldsymbol{\Lambda}^{(0)}$, and
$\boldsymbol{q}=\boldsymbol{q}^{(0)}$.\\    
\STATE \textbf{repeat}\\
\STATE \quad Solve \eqref{p5_main} for given 
$\boldsymbol{B}^{(s-1)}$,
$\boldsymbol{\Lambda}^{(s-1)}$, and
$\boldsymbol{q}^{(s-1)}$ to \\ \quad obtain the optimal solution $\boldsymbol{P}^{(s)}$.
\STATE \quad Solve \eqref{p6_main} for given 
$\boldsymbol{P}^{(s-1)}$,
$\boldsymbol{\Lambda}^{(s-1)}$ and
$\boldsymbol{q}^{(s-1)}$ to \\ \quad obtain the optimal solution $\boldsymbol{B}^{(s)}$.
\STATE \quad Solve 
$\text{P}_7:  \min
\limits_{\boldsymbol{\Lambda}}$ 
$\mathcal{L}_{\zeta }$ in \eqref{lagran}
for given 
$\boldsymbol{P}^{(s-1)}$,
$\boldsymbol{B}^{(s-1)}$, 
\\ \quad $\boldsymbol{q}^{(s-1)}$ to obtain the optimal solution $\boldsymbol{\Lambda}^{(s)}$.
\STATE \quad Solve \eqref{p7_main} for given 
$\boldsymbol{P}^{(s-1)}$,
$\boldsymbol{B}^{(s-1)}$, and
$\boldsymbol{\Lambda}^{(s-1)}$, to \\ \quad obtain the optimal solution $\boldsymbol{q}^{(s)}$.
\STATE   \textbf{until} ${s}=\mathcal{S}_{\max}$ or  convergence
\STATE   \textbf{return} \!\!
$\{
\boldsymbol{P}^{(s)}\!\!,\boldsymbol{B}^{\!(s)}\!\!,\boldsymbol{\Lambda}^{\!\!(s)}\!\!,\boldsymbol{q}^{\!(s)}
\}\!=\!$ 
$\{
\boldsymbol{P}^{(\mathrm{opt})}\!\!,\!\boldsymbol{B}^{(\mathrm{opt})}\!\!,\!\boldsymbol{\Lambda}^{(\mathrm{opt})}\!\!,\!\boldsymbol{q}^{(\mathrm{opt})}
\}$
\end{algorithmic}
\end{algorithm}	

\section{Computational Complexity Analysis}
\textcolor{black}{
The overall computational complexity of the proposed algorithm is determined by solving four subproblems: \( P_4 \), \( P_6 \), \( P_7 \), and \( P_8 \). The complexity of \( P_4 \), based on the SCA, is \( O_1 = \mathcal{O}((K+2)L^3) \). For \( P_6 \), the complexity, aligned with the Interior Point method, is \( O_2 = \mathcal{O}(NK(2NK + K + 2)^2) \). The complexity of \( P_7 \), solved using the Augmented Lagrangian method, is \( O_3 = \mathcal{O}(4K^2) \), and for \( P_8 \), the complexity is \( O_4 = \mathcal{O}(3(K+13)^2) \). Thus, the total complexity of the proposed solution is \( O_{\text{total}} = O_1 + O_2 + O_3 + O_4 = \mathcal{O}((K+2)L^3 + NK(2NK + K + 2)^2 + 4K^2 + 3(K+13)^2) \), indicating a polynomial-time complexity of degree six.
}

\vspace{-1mm}
\section{Simulation Results and Analysis}


In this section, we outline numerical results demonstrating the tradeoff between EE and SE in a room measuring $8 \times 8 \times 3$ meters.
The simulation setup includes key parameters to assess system performance. 
The PD active area, $C_{\mathrm{PD}}$, is $1\mathrm{cm}^2$.
There are four LEDs ($L=4$) evenly distributed on the ceiling at coordinates $(2,2,3)$, $(2,6,3)$, $(6,2,3)$, and $(6,6,3)$ m, with a total modulation bandwidth of $20$ $\mathrm{MHz}$. 
We consider $K = 4$ users and assume noise power is $\sigma^2=-120$ $\mathrm{dBm}$.
The minimum data rate requirement, $R_{\min}=R_{\min,k}$, is set at $0.5$ $\mathrm{bits/sec/Hz}$. 
The circuit power, $p_{\mathrm{cir}}$, is $6.7$ $\mathrm{W}$, and the average emitted optical power of each LED luminary is $\Bar{P}_{\mathrm{th}} = 30$ $\mathrm{dBm}$. 
We use a semi-angle at half power illuminance of $60^{\circ}$,
PD sensitivity of $1$ $\mathrm{A/W,}$ optical filter gain of $1$, a FoV of $80^{\circ}$, and an OIRS reflection coefficient factor $\xi$ of 0.95. 
Finally, the number of OIRS elements is $120$, where each unit has an area of $10 \times 10~\mathrm{cm}^2$. 

\begin{figure}[t]
\vspace{-1mm}
\centering
\begin{tikzpicture}
\begin{axis}[
  width=0.5\textwidth,
  height=0.39\textwidth,
  xlabel={SE [$\mathrm{bits/sec/Hz}$]},
  ylabel={EE [$\mathrm{bits/Joule/Hz}$]},
  xmin=.5, xmax=3.5,
  ymin=.55, ymax=1.3,
  xtick={0.5,1,...,3.5},
  ytick={0.6,0.7,...,1.3},
    legend style={
        at={(0.45,0.695)}, 
        anchor=south east, 
        font=\scriptsize,
        inner sep=0.5mm, 
        legend cell align={left},
        legend columns=1,
        /tikz/column 1/.style={
            column sep=-1pt,
        },
        /tikz/column 2/.style={
            column sep=-1pt,
        },
        /tikz/row 1/.style={
            row sep=-2.5pt,
        },
        /tikz/row 2/.style={
            row sep=-2.5pt, 
        },
        /tikz/row 3/.style={
            row sep=-2.5pt,
        },
        /tikz/row 4/.style={
            row sep=-2.5pt,
        }, 
        /tikz/row 5/.style={
            row sep=-4pt,
        }, 
    },
    tick label style={font=\small},
    xlabel style={font=\footnotesize, yshift=2.5mm},
    ylabel style={font=\footnotesize, yshift=-4.0mm},
  ymajorgrids=true,
  grid=major
]
\addplot[
    black,
    smooth,
    mark=o
] 
coordinates {
(0.5, 0.69)
(0.625, 0.7183)
(0.875, 0.772)
(1.0, 0.8012)
(1.25, 0.8603)
(1.5, 0.9188)
(1.875, 1.0074)
(2, 1.0349)
(2.1875, 1.0829)
(2.5, 1.1583)
(2.625, 1.1845)
(2.75, 1.2107)
(3.0, 1.24)
(3.125, 1.2238)
(3.25, 1.1945)
(3.3125, 1.1483)
(3.35, 1.0436)
(3.375, 0.9388)
};
\addlegendentry{Proposed \textbf{Algorithm~\ref{alg_final}}}

\addplot[smooth,
  color=purple,
  mark=star ,
]
coordinates {
(0.5, 0.677)
(0.625, 0.7)
(0.875, 0.7535)
(1.0, 0.7828)
(1.25, 0.8421)
(1.5, 0.9006)
(1.8125, 0.9751)
(2.0, 1.0155)
(2.375, 1.0958)
(2.5, 1.1227)
(2.625, 1.1457)
(2.8125, 1.1816)
(2.96875, 1.2009)
(3.0, 1.2050)
(3.125, 1.1791)
(3.15625, 1.1413)
(3.2, 1.0436)
(3.225, 0.9388)
(3.230, 0.8798)
};
\addlegendentry{With Fixed $\mathbf{P}$}

\addplot[smooth,
  color=orange,
  mark=diamond,
]
coordinates {
(0.5, 0.615)
(0.5655, 0.635)
(1.0, 0.7207)
(1.1542, 0.7567)
(1.385, 0.8019)
(1.7313, 0.8662)
(2.0198, 0.9114)
(2.3083, 0.9567)
(2.4238, 0.9748)
(2.5392, 0.9929)
(2.77, 1.02)
(2.8854, 1.019)
(3.0008, 0.9868)
(3.0585, 0.9567)
(3.0932, 0.8843)
(3.1163, 0.8119)
};
\addlegendentry{Random PD Orientation}

\addplot[smooth,
  color=blue,
  mark=triangle,
]
coordinates {
(0.5, 0.585)
(1, 0.675)
(1.125, 0.6959)
(1.35, 0.7375)
(1.6875, 0.7916)
(2, 0.8292)
(2.25, 0.8529)
(2.3625, 0.8648)
(2.5, 0.8737)
(2.61, 0.88)
(2.7, 0.8737)
(2.7788, 0.8611)
(2.8125, 0.8105)
(2.835, 0.76)
(2.845, 0.70)
};
\addlegendentry{Random $\mathbf{B}$ Association}

\addplot[
  smooth,
  color=red,
  mark=square,
]
coordinates {
(0.5,0.555)
(1,0.645)
(1.125,0.6659)
(1.35,0.7075)
(1.6875,0.7606)
(2,0.7992)
(2.25,0.8299)
(2.4,0.8437)
(2.5,0.85)
(2.6,0.8437)
(2.6788,0.8311)
(2.7125,0.7805)
(2.735,0.73)
(2.745,0.67)
};
\addlegendentry{No IRS}

\end{axis}
\end{tikzpicture}
\vspace{-6mm}
\caption{EE versus SE for different heights of the LEDs in the OWC network.}
\label{EE_vs_SE_fig}
\end{figure}

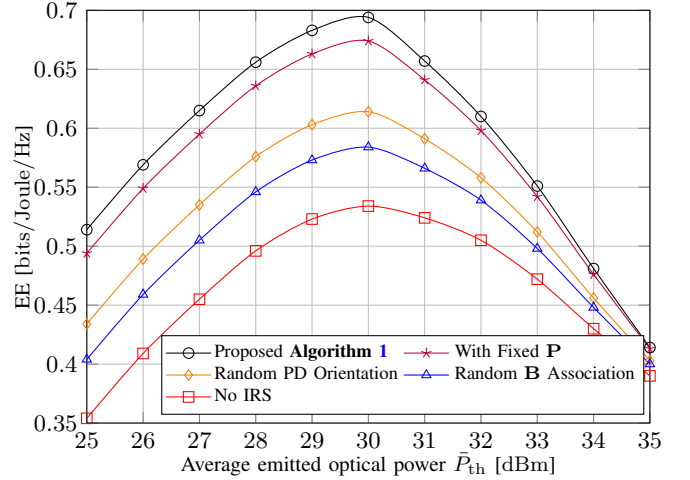
\begin{figure}[t]
\vspace{-1mm}
\centering
\begin{tikzpicture}
\begin{axis}[
  width=0.5\textwidth,
  height=0.39\textwidth,
  xlabel={Average emitted optical power $\Bar{P}_{\mathrm{th}}$ [$\mathrm{dBm}$]},
  ylabel={EE [$\mathrm{bits/Joule/Hz}$]},
  after end axis/.code={
  },
  xmin=25, xmax=35,
  ymin=0.35 , ymax=0.70 ,
  xtick={25,26,...,35},
  ytick={0.35 ,0.40 ,...,0.70 },
    legend style={
        at={(0.99,0.02)}, 
        anchor=south east, 
        font=\scriptsize,
        inner sep=0.5mm, 
        legend cell align={left},
        legend columns=2,
        /tikz/column 1/.style={
            column sep=-1pt,
        },
        /tikz/column 2/.style={
            column sep=0pt,
        },
        /tikz/row 1/.style={
            row sep=-2pt,
        },
        /tikz/row 2/.style={
            row sep=-2pt, 
        },
        /tikz/row 3/.style={
            row sep=-5pt,
        },
    },
    tick label style={font=\small},
    xlabel style={font=\footnotesize, yshift=2.5mm},
    ylabel style={font=\footnotesize, yshift=-4.0mm},
  ymajorgrids=true,
  grid=major
]

\addplot[smooth,
  color=black,
  mark=o,
]
coordinates {
(25, 0.514)
(26, 0.569)
(27, 0.615)
(28, 0.656)
(29, 0.683)
(30, 0.694)
(31, 0.657)
(32, 0.610)
(33, 0.551)
(34, 0.481)
(35, 0.414)
};
\addlegendentry{Proposed \textbf{Algorithm~\ref{alg_final}}}

\addplot[smooth,
  color=purple,
  mark=star ,
]
coordinates {
(25, 0.494)
(26, 0.549)
(27, 0.595)
(28, 0.636)
(29, 0.663)
(30, 0.674)
(31, 0.641)
(32, 0.598)
(33, 0.542)
(34, 0.476)
(35, 0.413)
};
\addlegendentry{With Fixed $\mathbf{P}$}

\addplot[smooth,
  color=orange,
  mark=diamond,
]
coordinates {
(25, 0.434)
(26, 0.489)
(27, 0.535)
(28, 0.576)
(29, 0.603)
(30, 0.614)
(31, 0.591)
(32, 0.558)
(33, 0.512)
(34, 0.456)
(35, 0.403)
};
\addlegendentry{Random PD Orientation}

\addplot[smooth,
  color=blue,
  mark=triangle,
]
coordinates {
(25, 0.404)
(26, 0.459)
(27, 0.505)
(28, 0.546)
(29, 0.573)
(30, 0.584)
(31, 0.566)
(32, 0.539)
(33, 0.498)
(34, 0.448)
(35, 0.400)
};
\addlegendentry{Random $\mathbf{B}$ Association}

\addplot[
  smooth,
  color=red,
  mark=square,
]
coordinates {
    (25, 0.354) 
    (26, 0.409) 
    (27, 0.455) 
    (28, 0.496)
    (29, 0.523) 
    (30, 0.534) 
    (31, 0.524) 
    (32, 0.505)
    (33, 0.472) 
    (34, 0.430) 
    (35, 0.390)
};
\addlegendentry{No IRS}

\end{axis}
\end{tikzpicture}
\vspace{-6mm}
\caption{EE vs. average LET optical transmit power with $\alpha=0.6$.}\label{EE_vs_max_p_fig}
\end{figure}

Fig. \ref{EE_vs_SE_fig} illustrates the interplay between EE and SE with different benchmarks. 
A bell-shaped curve is observed, indicating an initial increase in EE with SE, reaching a peak, followed by a subsequent decrease. 
\textcolor{black}{This trend is most prominent with the proposed \textbf{Algorithm~\ref{alg_final}}, which achieves superior performance by effectively optimizing the trade-off between EE and SE. The results peak at approximately $3$ $\mathrm{bits/sec/Hz}$, corresponding to the optimal operational point for the algorithm with the OIRS central placement optimized at $\boldsymbol{q} = [0, 4.1, 2]^T \in \mathcal{H}_1$.}
The bell-shaped trend correlates with how EE initially increases with the total data rate ($R_{\mathrm{tot}}$), peaks, and then falls due to the rising consumed power ($P_{\mathrm{tot}}$).The incremental power boosts the denominator of $\eta_{EE}$ more significantly than the gains in $R_{\mathrm{tot}}$ in the numerator, leading to a net decrease in EE.  This demonstrates the inherent trade-off in OIRS-aided OWC networks where, after a certain point, higher throughput does not lead to energy savings.
Scenarios with random PD orientation, random $\boldsymbol{B}$ association, and fixed transmit power $\boldsymbol{P}$ highlight the significance of strategic system configuration for efficiency. The marked performance dip without OIRS underscores its crucial role in boosting EE and SE. However, the benefits of using OIRS are limited by diminishing returns, an essential factor in OWC design prioritizing energy conservation and data transmission effectiveness.
\textcolor{black}{Furthermore, it is observed that fixed power allocation has a less significant effect on the overall system performance compared to the orientation and element associations. This is because once the OIRS placement is optimized, power allocation adjusts only the signal strength over the already-determined reflection paths. In contrast, orientation and element associations provide fine-grained control over the reflection angles and user-specific signal alignment, leading to more substantial gains in spectral and energy efficiency by minimizing interference and improving signal quality.}





\vspace{0.5mm}
Fig. \ref{EE_vs_max_p_fig} explores the relationship between EE and the average emitted optical power $\Bar{P}_{\mathrm{th}}$. 
It is observed that the system's EE, as proposed in our algorithm, escalates with an increase in $\Bar{P}_{\mathrm{th}}$. 
Specifically, as $\Bar{P}_{\mathrm{th}}$ rises, there is a notable enhancement in the system's EE, which eventually reaches a saturation point around 
$30$ $\mathrm{dBm}$.  
Clearly, regardless of the chosen benchmark, there exists an optimal emission power level at which EE peaks before it begins to recede. Past this peak, the resource allocator refrains from increasing power output, recognizing that further power boosts do not correspond to equivalent gains in efficiency, potentially due to escalating interference.



\section{Conclusions}

In this paper, we have explored an OIRS-assisted cell-free downlink OWC system. 
We introduced a novel tradeoff between EE and SE through a MOOP that concurrently maximizes different objectives within specific data rates and power constraints. 
Our findings emphasize the critical influence of receiver orientation, OIRS placement, OIRS elements' assignment, and power control on the system's efficiency, particularly in mitigating multi-user interference.
\textcolor{black}{Exploring the significance of artificial intelligence (AI) in solving such problems is a promising direction for future research.}

\textcolor{black}{
\appendices
\vspace{-6mm}
\section{Proof of the Proposition~\ref{prp_1}}\label{Appen_A}
Let us consider 
$\{\boldsymbol{P}^{(j)},\boldsymbol{B}^{(j)},\boldsymbol{\Lambda}^{(j)}, \boldsymbol{q}^{(j+1)}\}$
as the feasible solution set to $\text{P}_8$.
Then, the feasible solution set of $\text{P}_8$ is a feasible solution to $\text{P}_1$ as well. 
Therefore, the following 
$\{\boldsymbol{P}^{(j)},\boldsymbol{B}^{(j)},\boldsymbol{\Lambda}^{(j)}, \boldsymbol{q}^{(j)}\}$
and 
$\{\boldsymbol{P}^{(j+1)},\boldsymbol{B}^{(j+1)},\boldsymbol{\Lambda}^{(j+1)}, \boldsymbol{q}^{(j+1)}\}$
are feasible to $\text{P}_1$  in the $(j)$-th and $(j + 1)$- th iterations, respectively. 
Now, we define 
$f_{\text{P}_1}(\boldsymbol{P}^{(j)},\boldsymbol{B}^{(j)},\boldsymbol{\Lambda}^{(j)}, \boldsymbol{q}^{(j)})$, 
$f_{\text{P}_8}(\boldsymbol{q}^{(j)})$, 
$f_{\text{P}_7}(\boldsymbol{\Lambda}^{(j)})$, 
$f_{\text{P}_6}(\boldsymbol{B}^{(j)})$, 
and 
$f_{\text{P}_4}(\boldsymbol{P}^{(j)})$
as the objective functions of problems $\text{P}_1$, 
$\text{P}_8$, $\text{P}_7$, $\text{P}_6$ and $\text{P}_4$ in the $(j)$-th iteration, respectively. 
Thus, we have:
\begin{align}
&f_{\text{P}_1}(\boldsymbol{P}^{(j+1)},\boldsymbol{B}^{(j+1)},\boldsymbol{\Lambda}^{(j+1)}, \boldsymbol{q}^{(j+1)})
\nonumber\\    
&\overset{(a)}{=}
f_{\text{P}_8}(\boldsymbol{q}^{(j+1)})
\overset{(b)}{\geq}
f_{\text{P}_8}(\boldsymbol{q}^{(j)})
\nonumber\\
&=
f_{\text{P}_1}(\boldsymbol{P}^{(j)},\boldsymbol{B}^{(j)},\boldsymbol{\Lambda}^{(j)}, \boldsymbol{q}^{(j)}),
\end{align}
where $(a)$ follows the fact that the problem $\text{P}_1$ is equivalent to the problem $\text{P}_8$ for optimal $\boldsymbol{P}$, $\boldsymbol{B}$ and $\boldsymbol{\Lambda}$, and 
$(b)$ holds since 
$f_{\text{P}_8}(\boldsymbol{q}^{(j+1)})
{\geq}
f_{\text{P}_8}(\boldsymbol{q}^{(j)})$
according to sub-problem \ref{subD} (optimizing the placement of the OIRS).
Similarly, for a given ${\boldsymbol{P}^{(j)},\boldsymbol{B}^{(j)}, \boldsymbol{q}^{(j)}}$, we have:
\begin{align}
&f_{\text{P}_1}(\boldsymbol{P}^{(j+1)},\boldsymbol{B}^{(j+1)},\boldsymbol{\Lambda}^{(j+1)}, \boldsymbol{q}^{(j+1)})
\nonumber\\    
&\overset{(a)}{=}
f_{\text{P}_7}(\boldsymbol{\Lambda}^{(j+1)})
\overset{(b)}{\geq}
f_{\text{P}_7}(\boldsymbol{\Lambda}^{(j)})
\nonumber\\
&=
f_{\text{P}_1}(\boldsymbol{P}^{(j)},\boldsymbol{B}^{(j)},\boldsymbol{\Lambda}^{(j)}, \boldsymbol{q}^{(j)}),
\end{align}
where $(a)$ follows the fact that the problem $\text{P}_1$ is equivalent to the problem $\text{P}_7$ for optimal $\boldsymbol{P}$ and $\boldsymbol{B}$, and 
$(b)$ holds since 
$f_{\text{P}_7}(\boldsymbol{\Lambda}^{(j+1)})
{\geq}
f_{\text{P}_7}(\boldsymbol{\Lambda}^{(j)})$
according to sub-problem \ref{subC} (optimizing the PD orientation angles).
Similarly, for a given ${\boldsymbol{P}^{(j)},\boldsymbol{\Lambda}^{(j)}, \boldsymbol{q}^{(j)}}$, we have:
\begin{align}
&f_{\text{P}_1}(\boldsymbol{P}^{(j+1)},\boldsymbol{B}^{(j+1)},\boldsymbol{\Lambda}^{(j+1)}, \boldsymbol{q}^{(j+1)})
\nonumber\\    
&\overset{(a)}{=}
f_{\text{P}_6}(\boldsymbol{B}^{(j+1)})
\overset{(b)}{\geq}
f_{\text{P}_6}(\boldsymbol{B}^{(j)})
\nonumber\\
&=
f_{\text{P}_1}({\boldsymbol{P}^{(j)},\boldsymbol{B}^{(j)},\boldsymbol{\Lambda}^{(j)}, \boldsymbol{q}^{(j)}}).
\end{align}
where $(a)$ follows the fact that problem $\text{P}_1$ is equivalent to problem $\text{P}_6$ for optimal $\boldsymbol{P},\boldsymbol{\Lambda}$, and  $\boldsymbol{q}$, and 
$(b)$ holds since 
$f_{\text{P}_6}(\boldsymbol{B}^{(j+1)})
{\geq}
f_{\text{P}_6}(\boldsymbol{B}^{(j)})$
according to sub-problem \ref{subB} (the IRS optimal placement).
Equivalently, for a given $\boldsymbol{B}^{(j)},\boldsymbol{\Lambda}^{(j)}, \boldsymbol{q}^{(j)}$, we have:
\begin{align}
&f_{\text{P}_1}(\boldsymbol{P}^{(j+1)},\boldsymbol{B}^{(j+1)},\boldsymbol{\Lambda}^{(j+1)}, \boldsymbol{q}^{(j+1)})
\nonumber\\    
&\overset{(a)}{=}
f_{\text{P}_4}(\boldsymbol{P}^{(j+1)})
\overset{(b)}{\geq}
f_{\text{P}_4}(\boldsymbol{P}^{(j)})
\nonumber\\
&=
f_{\text{P}_1}({\boldsymbol{P}^{(j)},\boldsymbol{B}^{(j)},\boldsymbol{\Lambda}^{(j)}, \boldsymbol{q}^{(j)}}).
\end{align}
where $(a)$ follows the fact that the problem $\text{P}_1$ is equivalent to the problem $\text{P}_4$ for optimal $\boldsymbol{B},\boldsymbol{\Lambda}$, and  $\boldsymbol{q}$, and 
$(b)$ holds since 
$f_{\text{P}_4}(\boldsymbol{P}^{(j+1)})
{\geq}
f_{\text{P}_4}(\boldsymbol{P}^{(j)})$
according to sub-problem \ref{subA} (the LED emission power control).
From the above four inequalities, we can conclude the following inequality holds: 
\begin{align}
&f_{\text{P}_1}(\boldsymbol{P}^{(j+1)}\!,\boldsymbol{B}^{(j+1)}\!,\boldsymbol{\Lambda}^{(j+1)}\!, \boldsymbol{q}^{(j+1)})
{\geq}
f_{\text{P}_1}({\boldsymbol{P}^{(j)}\!,\boldsymbol{B}^{(j)}\!,\boldsymbol{\Lambda}^{(j)}\!, \boldsymbol{q}^{(j)}}).
\end{align}
Thus, we have shown that the objective function of $\text{P}_1$ is monotonically non-decreasing after each iteration.\QED
}
\vspace{-3mm}
\bibliographystyle{ieeetr}
\bibliography{ref}

\begin{thebibliography}{10}

\bibitem{9614037}
H.~Abumarshoud, L.~Mohjazi, O.~A. Dobre, M.~Di~Renzo, M.~A. Imran, and H.~Haas, ``{LiFi} through reconfigurable intelligent surfaces: {A} new frontier for {6G}?,'' {\em IEEE Veh. Technol. Mag.}, vol.~17, pp.~37--46, Mar. 2022.

\bibitem{10462222}
H.~B. Eldeeb, S.~Naser, L.~Bariah, S.~Muhaidat, and M.~Uysal, ``Digital twin-assisted {OWC}: {Towards} smart and autonomous {6G} networks,'' {\em IEEE Netw., Early Acess}, pp.~1--9, 2024.

\bibitem{9893325}
C.-W. Chow, Y.-C. Chang, S.-I. Kuo, P.-C. Kuo, J.-W. Wang, Y.-H. Jian, Z.~Ahmad, P.-H. Fu, J.-W. Shi, D.-W. Huang, T.-Y. Hung, Y.-Z. Lin, C.-H. Yeh, and Y.~Liu, ``Actively controllable beam steering optical wireless communication {(OWC)} using integrated optical phased array {(OPA)},'' {\em J. Light. Technol.}, vol.~41, pp.~1122--1128, Feb. 2023.

\bibitem{9662064}
S.~Sun, T.~Wang, F.~Yang, J.~Song, and Z.~Han, ``Intelligent reflecting surface-aided visible light communications: Potentials and challenges,'' {\em IEEE Veh. Technol. Mag.}, vol.~17, pp.~47--56, Mar. 2022.

\bibitem{9276478}
A.~M. Abdelhady, A.~K.~S. Salem, O.~Amin, B.~Shihada, and M.-S. Alouini, ``Visible light communications via intelligent reflecting surfaces: {Metasurfaces} vs mirror arrays,'' {\em IEEE Open J. Commun. Soc.}, vol.~2, pp.~1--20, 2021.

\bibitem{9500409}
L.~Qian, X.~Chi, L.~Zhao, and A.~Chaaban, ``Secure visible light communications via intelligent reflecting surfaces,'' in {\em Proc. IEEEInt. Conf. Commun. (ICC)}, pp.~1--6, Jun. 2021.

\bibitem{10168927}
H.~B. Eldeeb, S.~Naser, L.~Bariah, and S.~Muhaidat, ``Energy and spectral efficiency analysis for {RIS-Aided V2V-Visible} light communication,'' {\em IEEE Commun. Lett.}, vol.~27, pp.~2373--2377, Sep. 2023.

\bibitem{9910023}
S.~Aboagye, A.~R. Ndjiongue, T.~M.~N. Ngatched, and O.~A. Dobre, ``Design and optimization of liquid crystal {RIS}-based visible light communication receivers,'' {\em IEEE Photonics J.}, vol.~14, pp.~1--7, Dec. 2022.

\bibitem{9543660}
S.~Aboagye, T.~M.~N. Ngatched, O.~A. Dobre, and A.~R. Ndjiongue, ``Intelligent reflecting surface-aided indoor visible light communication systems,'' {\em IEEE Commun. Lett.}, vol.~25, pp.~3913--3917, Dec. 2021.

\bibitem{10024150}
S.~Sun, F.~Yang, J.~Song, and R.~Zhang, ``Intelligent reflecting surface for {MIMO} {VLC}: {Joint} design of surface configuration and transceiver signal processing,'' {\em IEEE Trans. Wirel. Commun.}, vol.~22, pp.~5785--5799, Sep. 2023.

\bibitem{9954038}
H.~Nouri, S.~M. Sait, and M.~Uysal, ``Adaptive modulation for {FSO IM/DD} systems with multiple transmitters and receivers,'' {\em IEEE Commun. Lett.}, vol.~27, pp.~586--590, Feb. 2023.

\bibitem{10190313}
S.~Sun, W.~Mei, F.~Yang, N.~An, J.~Song, and R.~Zhang, ``Optical intelligent reflecting surface assisted {MIMO} {VLC}: {Channel} modeling and capacity characterization,'' {\em IEEE Trans. Wirel. Commun.}, vol.~23, pp.~2125--2139, Mar. 2024.

\bibitem{9681888}
A.~M. Abdelhady, O.~Amin, A.~K.~S. Salem, M.-S. Alouini, and B.~Shihada, ``Channel characterization of {IRS}-based visible light communication systems,'' {\em IEEE Trans. Commun.}, vol.~70, pp.~1913--1926, Mar. 2022.

\bibitem{10198213}
S.~Sun, N.~An, F.~Yang, J.~Song, and Z.~Han, ``Capacity characterization analysis of optical intelligent reflecting surface assisted {MISO} {VLC},'' {\em IEEE Internet Things J.}, vol.~11, pp.~4801--4814, Feb. 2024.

\bibitem{Arora}
R.~T. Marler and J.~S. Arora, ``Survey of multi-objective optimization methods for engineering,'' {\em Struct. Multidiscipl. Optim.}, vol.~26, pp.~369--395, Mar. 2004.

\bibitem{5165179}
J.~Papandriopoulos and J.~S. Evans, ``{SCALE}: {A} low-complexity distributed protocol for spectrum balancing in multiuser {DSL} networks,'' {\em IEEE Trans. Inf. Theory}, vol.~55, pp.~3711--3724, Aug. 2009.

\bibitem{10459057}
L.~Liang, X.~Li, H.~Huang, Z.~Yin, N.~Zhang, and D.~Zhang, ``Securing multidestination transmissions with relay and friendly interference collaboration,'' {\em IEEE Internet Things J.}, vol.~11, pp.~18782--18795, May 2024.

\bibitem{10456885}
J.~Jalali, A.~Khalili, H.~Tabassum, R.~Berkvens, J.~Famaey, and W.~Saad, ``Energy-efficient {THz} {NOMA} for {SWIPT-Aided} miniature {UAV} networks,'' {\em IEEE Commun. Lett.}, vol.~28, pp.~1107--1111, May 2024.

\end{thebibliography}

\end{document}